\documentclass[a4paper]{article}
\usepackage[top=25truemm, bottom=25truemm, left=20truemm, right=20truemm]{geometry}

\usepackage{graphicx}%
\usepackage{multirow}%
\usepackage{amsmath,amssymb,amsfonts}%
\usepackage{amsthm}%
\usepackage{mathrsfs}%
\usepackage[title]{appendix}%
\usepackage{xcolor}%
\usepackage{textcomp}%
\usepackage{manyfoot}%
\usepackage{booktabs}%
\usepackage{algorithm}%
\usepackage{algorithmicx}%
\usepackage{algpseudocode}%
\usepackage{listings}%

\usepackage[legacycolonsymbols]{mathtools}
\usepackage{siunitx}

\usepackage{natbib} 

\usepackage[colorlinks=true,citecolor=blue,linkcolor=blue,breaklinks=true]{hyperref} 

\newtheorem{theorem}{Theorem}
\newtheorem{remark}{Remark}%

\newtheorem{definition}{Definition}%
\newtheorem{lemma}{Lemma}

\title{\textbf {Accurate and Efficient Approximation\\ of the Null Distribution of Rao's Spacing Test}}
\date{}
\author{
\textbf {Yoshiki Kinoshita{$^1$}, Aya Shinozaki{$^2$} and Toshinari Kamakura{$^3$}}\\
\normalsize{{${}^1$yoshikik.92g@g.chuo-u.ac.jp}; {${}^2$shinozaki.223@g.chuo-u.ac.jp}; {${}^3$ktoshinari001f@g.chuo-u.ac.jp}} \\
\normalsize{{{${}^{1,2,3}$}Chuo University, 1-13-27 Kasuga, Bunkyo-ku, Tokyo 112-8551 Japan}}
}

\begin{document}

\maketitle

\abstract{

Rao's spacing test is a widely used nonparametric method for assessing uniformity on the circle. 
However, its broader applicability in practical settings has been limited because the null distribution is not easily calculated. 
As a result, practitioners have traditionally depend on pre-tabulated critical values computed for a limited set of sample sizes, which restricts the flexibility and generality of the method. 
In this paper, we address this limitation by recursively computing higher-order moments of the Rao's spacing test statistic and employing the Gram-Charlier expansion to derive an accurate approximation to its null distribution. 
This approach allows for the efficient and direct computation of P-values for arbitrary sample sizes, thereby eliminating the dependency on existing critical value tables. 
Moreover, we confirm that our method remains accurate and effective even for large sample sizes that are not represented in current tables, thus overcoming a significant practical limitation. 
Comparative evaluations with published critical values and saddlepoint approximations demonstrate that our method achieves a high degree of accuracy across a wide range of sample sizes. 
These findings greatly improve the practicality and usability of Rao's spacing test in both theoretical investigations and applied statistical analyses. 
}

\section{Introduction}\label{sec1}

Testing for uniformity on the circle is a fundamental problem in the analysis of directional data. 
For unimodal distributions, the von Mises distribution is widely used in the literature \cite{mardia2009directional, jammalamadaka2001topics}. 
The Rayleigh test for uniformity , which is derived from the likelihood ratio test is known to be powerful under the von Mises distribution. 
However, when the data are multimodal, nonparametric methods such as Rao's spacing test have been shown to be more effective than parametric alternatives like the Rayleigh test.

Rao's spacing test has seen limited practical application due to the difficulty of numerically calculating the distribution of its test statistic under the null hypothesis.
As a result, practitioners must consult tabulated critical values that are only available for specific sample sizes, as provided by \cite{russell1995expanded}. 
This constraint severely limits the test's flexibility.

In this paper, we calculate higher-order moments of the Rao's spacing test statistic recursively and apply the Gram-Charlier expansion to approximate its null distribution. 
This allows us to compute P-values directly, thus enabling the application of Rao's spacing test to any sample size. 
We validated our method by comparing the derived approximations with the existing tables from \cite{russell1995expanded} and those from \cite{gatto1999conditional}. 
The comparisons show that our proposed approach achieves accuracy comparable to or better than theirs. 
In addition, the existing tables are limited to sample sizes up to $n=1000$, and for larger samples, the critical value for $n=1000$ is commonly used as an approximation. 
However, the effect of this substitution has not been thoroughly investigated. 
In this study, we examine the impact of this approximation by comparing our proposed method with the existing approach for larger samples, specifically when $n=10000$. 

\section{Moments of Rao's Spacing Test}\label{sec2}

The Rao's spacing test statistic plays a central role in testing uniformity on the circle. Although its probability density function (PDF) is known, both the PDF and the cumulative distribution function (CDF) are difficult to compute in practice. 
Consequently, critical values have been tabulated for specified sample sizes and angle settings by \cite{russell1995expanded}, and made available through implementations in R and MATLAB.
Their method, based on numerical integration, is efficient for small sample sizes. However, as the sample size increases, the recursion terms become computationally expensive, making real-time evaluation impractical. Thus, users are typically limited to the discrete settings included in the precomputed tables.
To address this limitation, we derive higher-order moments of the Rao's spacing test statistic under the null hypothesis and construct an approximate distribution using the Gram-Charlier expansion. This approach enables testing at arbitrary sample sizes without dependence on tabulated values. 

This section begins with an introduction to Rao's spacing test statistic. We then provide a general expression for its higher-order moments. Lastly, we derive the Gram-Charlier expansion of the CDF.

\subsection{Rao's Spacing Test Statistic}\label{subsec1}

Let $\alpha_1, \dots, \alpha_n$ be observations on the circle, ordered such that:
\begin{eqnarray*}
0\le \alpha_{(1)}\le \cdots \le\alpha_{(n)}<2\pi. 
\end{eqnarray*}

The Rao's spacing test statistic is defined as:
\begin{eqnarray}\label{eq1}
U_n=\frac{1}{2}\sum_{i=1}^n\left|T_i-\frac{2\pi}{n} \right|, 
\end{eqnarray}
where 
$T_i=\alpha_{(i+1)}-\alpha_{(i)} (i=1, \dots, n-1)$ and 
$T_n=\alpha_{(1)}-\alpha_{(n)}+2\pi$. 

While the exact PDF of $U_n$ is known, its evaluation is computationally intensive. 
The density function of $U_n$ is known to be: 
\begin{eqnarray*}
f_n(u)=(n-1)!\sum_{j=1}^{n-1}\binom{n}{j}\left(\frac{u}{2\pi}\right)^{n-j-1}
\frac{\phi_j(nu)}{(n-j-1)!n^{j-1}} \quad \left(0\le u\le 2\pi\left(1-\frac{1}{n}\right)\right), 
\end{eqnarray*}
where
\begin{eqnarray*}
\phi_j(x)=\frac{1}{2\pi(j-1)!}\sum_{k=0}^{j-1}(-1)^k\binom{j}{k}\left[\max\left(0,  \frac{x}{2\pi}-k\right) \right]^{j-1}. 
\end{eqnarray*}
Here, $\phi_j(x)$ denotes the Irwin-Hall distribution, which is the distribution of the sum of independent uniform random variables on $[0, 2\pi]$.

\subsection{Derivation of Higher-Order Moments}\label{subsubsec1}

\cite{Rao1969} provided a table of critical points for the Rao's spacing test statistic by modifying the values originally reported in \cite{sherman1950random}. 
However, this table is limited to sample sizes ranging from $n=2$ to $n=20$, with only three significance levels ($\alpha = 0.01$, $0.05$, and $0.10$). 
This limitation stems from the fact that the exact density function of the test statistic involves complex expressions, making the calculation of critical values analytically challenging.

To address these limitations, \cite{russell1995expanded} employed the trapezoidal rule to numerically compute critical values for sample sizes ranging from $n=4$ to $n=1000$. 
They proposed a recursive method to evaluate part of the trapezoidal integration more efficiently. 
Nevertheless, the computation of the critical points remained time-consuming, making it impractical to calculate them on the fly and necessitating the use of precomputed tables.

In response, we derived a recursive method for computing the higher-order moments of Rao's spacing test statistic and applied the Gram-Charlier expansion, an asymptotic series, to approximate its distribution.
This approach enables the rapid computation of P-values, which was previously infeasible due to the complexity of the distribution. 
Our method thus provides a significant computational advantage, especially for large-scale applications.

In this subsection, we derive the following theorem to compute the $r$th moment of Rao's spacing test statistic recursively. 

\begin{theorem}\label{col}
\begin{align}
E(U_n^r)=\frac{(2\pi)^r}{n^{n+r-1}(n)_{\overline{r}}}\sum_{j=1}^ra_j^{(r)}(n)_{\underline{j}}(n-j)^{n+r-1}\notag
\end{align}
where $a_j^{(r)}$ satisfies: 
\begin{align*}
	a_{r}^{(r)}&=1, \\
	a^{(r+1)}_1&=(r+1)a_1^{(r)},\\
	a^{(r+1)}_j&=(r+j)a_j^{(r)}+a_{j-1}^{(r)}\qquad(j=2,\ldots,r), 
\end{align*}
and for all $n\in\mathbb{N}_{\geq 2}$ and $r\in\mathbb{N}_{\geq 1}$. 
\end{theorem}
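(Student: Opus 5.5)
The plan is to move to normalized spacings and use the Dirichlet (uniform-simplex) structure. Under the null hypothesis, $D_i=T_i/(2\pi)$, $i=1,\dots,n$, are uniformly distributed on the simplex $\{d_i\ge 0,\ \sum d_i=1\}$, i.e.\ Dirichlet$(1,\dots,1)$. Since $\sum_i (T_i-2\pi/n)=0$, half the sum of absolute deviations equals the sum of positive parts. Hence
\begin{align*}
U_n=2\pi\sum_{i=1}^n\left(D_i-\tfrac1n\right)_+ .
\end{align*}
I will expand $U_n^r$ by the multinomial theorem and group the terms by the set $J$ of indices that actually appear. By exchangeability, a set of size $j$ contributes $\binom{n}{j}$ times the term for $J=\{1,\dots,j\}$, with exponents $k_1,\dots,k_j\ge 1$ summing to $r$.

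The key lemma is the following formula, valid for $j\le n$ and $k_i\ge1$ with $K=\sum k_i$:
\begin{align*}
E\left[\prod_{i=1}^j\left(D_i-\tfrac1n\right)_+^{k_i}\right]=\frac{(n-1)!\,\prod_i k_i!}{(n-1+K)!}\left(1-\tfrac{j}{n}\right)^{n-1+K}.
\end{align*}
I will prove it from two facts. First, $P(D_i>c_i,\ i\in J)=(1-\sum_{i\in J}c_i)_+^{n-1}$. Second, conditionally on this event, the vector with $D_i-c_i$ for $i\in J$ and $D_i$ otherwise, divided by $1-\sum c_i$, is again Dirichlet$(1,\dots,1)$. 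Both follow from translating the uniform density on the simplex. Together with the standard Dirichlet moment $E\prod D_i^{k_i}=(n-1)!\prod k_i!/(n-1+K)!$, this gives the lemma: the factor $(1-j/n)^{n-1}$ is the probability and $(1-j/n)^{K}$ comes from rescaling. Justifying the conditional-Dirichlet step cleanly is the main technical point. I would do it by a direct change of variables in the integral over the simplex, or via the exponential representation $D_i=E_i/\sum E_k$ if that is cleaner.

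Substituting, the multinomial coefficient $r!/\prod k_i!$ cancels the $\prod k_i!$. The inner sum over compositions of $r$ into $j$ positive parts then contributes $r!\binom{r-1}{j-1}$. Writing $(n+r-1)!/(n-1)!=(n)_{\overline r}$ gives
\begin{align*}
E(U_n^r)=\frac{(2\pi)^r}{n^{n+r-1}(n)_{\overline r}}\sum_{j=1}^r r!\binom{r-1}{j-1}\binom{n}{j}(n-j)^{n+r-1}.
\end{align*}
Terms with $j>n$ vanish because $\binom nj=0$, and the $j=n$ term vanishes since $n-j=0$. So the sum may run to $r$ regardless of $n\ge2$.

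It then remains to identify $a_j^{(r)}=r!\binom{r-1}{j-1}/j!$, using $\binom nj=(n)_{\underline j}/j!$. These are the (unsigned) Lah numbers. I will check the stated initial values and recursion directly. First, $a_r^{(r)}=1$. Second, $a_1^{(r)}=r!$, so $a_1^{(r+1)}=(r+1)a_1^{(r)}$. Third, for $2\le j\le r$, the claim $a_j^{(r+1)}=(r+j)a_j^{(r)}+a_{j-1}^{(r)}$ reduces, after dividing by $r!/j!$, to
\begin{align*}
(r+1)\binom{r}{j-1}=(r+j)\binom{r-1}{j-1}+j\binom{r-1}{j-2}.
\end{align*}
This is a routine binomial identity: write both sides as multiples of $\binom{r-1}{j-1}$ using $\binom{r-1}{j-2}=\frac{j-1}{r-j+1}\binom{r-1}{j-1}$. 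Since the recursion together with the boundary values determines $a_j^{(r)}$ uniquely, this completes the plan.
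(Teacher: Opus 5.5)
Your argument is correct, apart from one misstatement addressed in the second paragraph, and it takes a genuinely different route from the paper.

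The paper never works with the spacings directly. It starts from the N\"orlund-polynomial formula of Lemma~\ref{lemma1}, which it states without proof. It converts this via $S_2(n+z,z)=\binom{n+z}{n}B_n^{(-z)}$ into the Stirling sum $I_{n,r,0}$. It then proves the formula for the auxiliary family $I_{n,r,u}$ by induction on $r$, using $S_2(n+1,j)=S_2(n,j-1)+jS_2(n,j)$ to get the three-term relation $I_{n,r+1,u}=I_{n,r,u+1}-\frac{1}{n+r}I_{n,r,u+2}+\frac{n(n-1)}{(n+r)(n+r-1)}I_{n-1,r,u+2}$. The recursion for $a_j^{(r)}$ is read off from this relation, and the coefficients are only ever defined recursively and tabulated.

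You instead compute $E(U_n^r)$ in one pass from the positive-part representation, the shifted-simplex identity and the Dirichlet moments, with no induction and no special-function input. This buys three things:
\begin{itemize}
\item a self-contained derivation;
\item a transparent meaning for each term, since $(1-j/n)^{n-1}$ is the probability that $j$ given spacings all exceed $2\pi/n$;
\item the closed form $a_j^{(r)}=\frac{r!}{j!}\binom{r-1}{j-1}$ (unsigned Lah numbers, matching Table~\ref{table_a}), so the stated recursion reduces to the Lah recurrence and the table becomes unnecessary.
\end{itemize}
The paper's route, in exchange, stays tied to the exact density and yields the more general identity for $I_{n,r,u}$. However, it depends on Lemma~\ref{lemma1} and leaves the final coefficient bookkeeping implicit.

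The misstatement concerns the labeling. With the paper's convention ($\alpha_{(1)}\le\cdots\le\alpha_{(n)}$ measured from a fixed origin), $(T_1,\ldots,T_n)/(2\pi)$ is \emph{not} uniform on the simplex. $T_n$ is the arc covering the origin, which is size-biased, so the vector is Dirichlet$(1,\ldots,1,2)$. Already for $n=2$ one has $T_1/(2\pi)\sim\mathrm{Beta}(1,2)$. So the exchangeability you invoke fails for the $T_i$ as labeled.

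The fix is one line. $U_n$ is a symmetric function of the spacings, and by rotation invariance the spacings can be taken to be Dirichlet$(1,\ldots,1)$: rotate so that $\alpha_1$ sits at the origin and label the arcs starting from it (equivalently, apply an independent uniform cyclic relabeling). Hence $U_n\overset{d}{=}2\pi\sum_i(D_i-1/n)_+$ with $D$ uniform on the simplex, and the rest of your argument goes through unchanged.
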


\begin{remark}
Theorem~\ref{col} is expressed as a linear combination of $(n)_{\underline{j}}(n-j)^{n+r-1}, (j=1, \dots, r)$, and the coefficients are as follows: 
\[
\frac{(2\pi)^r}{n^{n+r-1}(n)_{\overline{r}}}a_j^{(r)}(n)_{\underline{j}} \quad (j=1, \dots, r). 
\]
\end{remark}

The proof of Theorem~\ref{col} is provided in the appendix~\ref{secB}.

The first and second moments can be calculated by applying the general formula from Theorem~\ref{col}, using the coefficients listed in Table~\ref{table_a}, as follows. 
\begin{eqnarray*}
E(U_n)&=&2\pi\left(1-\frac{1}{n}\right)^n, \\
E(U_n^2)&=&\frac{(2\pi)^2}{n^{n+1}(n+1)}\left(2(n-1)^{n+1}+(n-1)(n-2)^{n+1}\right). 
\end{eqnarray*}

\subsection{Gram-Charlier Expansion of the CDF}
Using raw moments $\mu^r=E(U_n^r)$, we compute cumulants and construct the Gram-Charlier expansion:
\begin{eqnarray*}
f_n(u)&=&\sum_{j=3}^\infty B_j(0,0,\kappa'_{n,3},\ldots,\kappa'_{n,j})\frac{H_j(u)}{j!}\phi(u), \\
F_n(u)&=&\Phi(u)-\phi(u)\sum_{j=3}^\infty B_j(0,0,\kappa'_{n,3},\ldots,\kappa'_{n, j})\frac{H_{j-1}(u)}{j!}, 
\end{eqnarray*}
where $\phi$ and $\Phi$ are the standard normal PDF and CDF, respectively and $B_j$ is the $j$th complete exponential Bell polynomials. 
Furthermore, $H_j (j=0,1,\ldots)$ represents the Hermite polynomials, as defined by 
\begin{align*}
	H_0(x)&=1, \\
	H_{(j+1)}(x)&=xH_j(x)-\frac{d}{dx}H_j(x)\qquad(j=0,1,\ldots). 
\end{align*}

This expansion allows for the accurate computation of P-values without dependence on tabulated critical values. 
When $n$ is small (say $6$), this approximation may be inaccurate; in such cases, an exact result can be obtained by directly integrating the PDF of $f_n(u)$. 

\section{Comparison with Existing Methods}\label{sec3}

In this section, we assess the performance of the proposed Gram-Charlier approximation by comparing it with existing methods. 
Specifically, we compare our method with the tabulated values provided by \cite{russell1995expanded} and the saddlepoint approximations proposed by \cite{gatto1999conditional}.

\begin{table}[!htbp]
\centering
\caption{Comparison of the accuracy between the Gram-Charlier approximation and the saddlepoint approximation. $(n=10)$}
\label{table_saddle}
\begin{tabular}{r||r|rrr|r}
\multicolumn{1}{c}{\text{Angle}}&\multicolumn{1}{c}{\text{Exact}}&\multicolumn{3}{c}{\text{Saddlepoint Approximations}}&{\text{Our method}}\\
$t$&$\Pr_E(T<{t})$&$\Pr_{LR}(T<{t})$&$\Pr_{BN}(T<{t})$&$\Pr_{ID}(T<{t})$&$\Pr_{G_{10}}(T<{t})$\\\hline
\ang{50}&.001&.001&.001&0&.001\\
\ang{60}&.004&{.004}&{.004}&{.004}&{.004}\\
\ang{70}&.015&{.015}&{.015}&{.015}&{.015}\\
\ang{80}&.042&.041&.041&.043&{.042}\\
\ang{90}&.093&{.093}&.092&.096&{.093}\\
\ang{100}&.178&.176&.175&.182&{.178}\\
\ang{110}&.294&.292&.291&.300&{.294}\\
\ang{120}&.433&.430&.429&.439&{.433}\\
\ang{130}&.577&.580&.578&.583&{.577}\\
\ang{140}&.708&.706&.705&.713&{.708}\\
\ang{150}&.815&.813&.812&.819&{.815}\\
\ang{160}&.892&.891&.890&.895&{.892}\\
\ang{170}&.943&.942&.941&.945&{.943}\\
\ang{180}&.972&{.972}&.971&.973&{.972}\\
\ang{190}&.988&.987&.987&.989&{.988}\\
\ang{200}&.995&{.995}&{.995}&.996&{.995}\\
\ang{210}&.998&{.998}&{.998}&.999&{.998}\\
\ang{220}&.999&{.999}&{.999}&1.000&{.999}\\\hline
\end{tabular}
\end{table}

Table~\ref{table_saddle} presents the cumulative distribution function values for several values of the Rao's spacing test statistic when the sample size is fixed at $n = 10$. 
The column $\text{Pr}_{E}(T< t)$ shows the exact values from \cite{russell1995expanded}, while $\text{Pr}_{LR}(T<t)$, $\text{Pr}_{BN}(T<t)$, and $\text{Pr}_{ID}(T<t)$ represent saddlepoint approximations from \cite{gatto1999conditional}. 
The column $\text{Pr}_{G_{10}}(T < t)$ gives the Gram-Charlier approximations using moments up to the $10$th order. 
The results indicate that our Gram-Charlier expansion using up to the $10$th moment matches the exact probabilities across the entire range. 
Moreover, our method outperforms the saddlepoint approximations in terms of accuracy in all cases. 

\section{Examples}\label{sec4}
We conduct a uniformity test for the example in \cite{russell1995expanded}  based on the method we have derived. 
The dataset consists of hypothetical data for an epileptic patient, recording the times of seizure on sets. 
The angular values are as follows: 
\[
\ang{5}, \ang{10}, \ang{10}, \ang{12}, \ang{17}, \ang{85}, \ang{90}, \ang{99}, \ang{100}, \ang{110}, \ang{153}, \ang{233}, \ang{235}, \ang{296}, \ang{331}.
\] 
In our method, when the 10th moment ($r=10$) is used, the example results in a sample size of $n=15$ and $U_{15} = 3.089233(=\ang{177})$, as shown in Equation \eqref{eq1}. 
The corresponding P-value is 0.0174, leading to the rejection of the null hypothesis of uniformity. 

We also conduct a uniformity test for Example 6.9 from \cite{mardia2009directional}, based on the method we have derived. 
This example involves a pigeon-homing experiment in which 13 birds were released, and their vanishing angles were recorded as follows:
\[
\ang{20}, \ang{135}, \ang{145}, \ang{165}, \ang{170}, \ang{200}, \ang{300}, \ang{325}, \ang{335}, \ang{350}, \ang{350}, \ang{350}, \ang{355}.
\]
In our method, when the 10th moment ($r=10$) is used, the example results in a sample size of $n=13$ and $U_{13} = 2.826091(=\ang{161.9231})$, as shown in Equation~\eqref{eq1}. The corresponding P-value is 0.0786, which does not lead to the rejection of the null hypothesis of uniformity. Our method allows for the calculation of the P-value, enabling the uniformity test to be conducted without the need for statistical tables. In contrast, \cite{mardia2009directional} refer to the tables provided by \cite{Rao1969} to obtain the results of the uniformity test. 
They concluded that the hypothesis of uniformity was not rejected at the 5\% level of significance, since the test statistic $U_{13}=\ang{162}$ lies between the the 5\% critical value ($\ang{167.76}$) and the 10\% critical value ($\ang{158.40}$) given in Rao (1969)'s table. 

We also conduct a uniformity test for an example from \cite{Kitamura1989circular}, based on the method we have derived. 
This example involves Kamiokande neutrino data, and the dataset is as follows:
\begin{eqnarray*}
&&\ang{30}, \ang{36}, \ang{60}, \ang{64}, \ang{76}, \ang{98}, \ang{136}, \ang{140}, \ang{182}, \ang{216}, \ang{244}, \ang{270} (n=12), \\
&&\ang{30}, \ang{36}, \ang{60}, \ang{64}, \ang{76}, \ang{98}, \ang{140}, \ang{182}, \ang{216}, \ang{244}, \ang{270} (n=11). 
\end{eqnarray*}
In our method, when the 1st to 10th moments ($r=10$) are used, the example results in a sample size of $n=12$ and $U_{12} = 1.989675(=\ang{114})$, and a sample size of $n=11$ and $U_{11} = 1.869089 (=\ang{107.0909})$, as shown in Equation \eqref{eq1}. 
The corresponding P-values are $0.685 (n=12)$ and $0.762 (n=11)$, which does not lead to the rejection of the null hypothesis of uniformity. 
Our method allows for the calculation of the P-value, enabling the uniformity test to be conducted without the need for statistical tables. 
In contrast, \cite{Kitamura1989circular} refers to the tables provided by \cite{Rao1969} to obtain the results of the uniformity test. 
He notes that the test statistic for the Kamiokande data are both below the critical value of $\alpha=0.10$, and therefore, the significance of deviations from randomness cannot be revealed. 
He also claimed that there are no P-values corresponding to the statistic in Rao's  table. 
Therefore, our method, which allows for the calculation of P-values, is considered useful for determining whether the null hypothesis should be rejected in the Kamiokande data. 
In case of the Kamiokande data, the derived P-values are both very large, leading us to conclude that the hypothesis should not be rejected.

\section{Simulation Study}
In the circular package in R, the Rao's uniformity test refers to the \cite{russell1995expanded} tables, which only provide critical values up to a sample size of $n = 1000$. 
When performing the test for sample sizes larger than $1000$, the critical value for $n = 1000$ is used as a substitute. 
To evaluate the impact of this, we carry out a simulation with a sample size of $n = 10000$ and compare our proposed method with the circular package in R. 

For $n = 1000$, the rejection critical values from the \cite{russell1995expanded} tables coincide with those of our proposed method. 
For $n = 10000$, $d=10$, the critical values are listed in Table~\ref{table_rejection_critical_values}. 
\begin{table}[!htbp]
\centering
\caption{Critical values. }\label{table_rejection_critical_values}
\begin{tabular}{l|rrrr}
Significant levels &0.001&0.01&0.05&0.10\\\hline
\cite{russell1995expanded} ($n=1000$)&140.99&138.84&136.94&135.92\\
our proposed method ($n=1000$)&140.99&138.84&136.94&135.92\\
\cite{russell1995expanded} ($n=2000$)&NA&NA&NA&NA\\
our proposed method ($n=2000$)&138.49&136.97&135.63&134.91\\
\cite{russell1995expanded} ($n=10000$)&NA&NA&NA&NA\\
our proposed method ($n=10000$)&135.14&134.47&133.87&133.55\\
\end{tabular}
\end{table}

For $n = 10000$, the critical value is smaller compared to $n = 1000$,  suggesting that as the sample size increases, the rejection region becomes larger. 
If, despite having a sample size of $10000$, the table for $n = 1000$ is used as a substitute, it may result in failing to reject the null hypothesis in situations where it should correctly be rejected. 

To compare the rejection rates, we generate a uniform distribution on the unit circle with $n = 10000$ and perform the test at a significance level of $\alpha = 0.05$. 
This procedure is repeated $10000$ times, and the number of rejections is recorded in Table~\ref{table_alpha_005}. 

\begin{table}[!htbp]
\centering
\caption{Number of acceptances and rejections. }\label{table_alpha_005}
\begin{tabular}{l|rrr}
&\mbox{accept}&\mbox{reject}&repetations \\\hline
\mbox{circular package} &10000&0&10000\\
\mbox{our proposed method} &9482&518&10000\\
\end{tabular}
\end{table}

The results show that the rejection rate of our proposed method generally aligns with the set significance level ($\alpha = 0.05$), whereas the circular package shows no rejections at all. 
Since this test is performed on uniform distribution data, the absence of rejections is not an issue. 
However, this suggests that the detection power for distributions that should be rejected is expected to be lower when using the circular package. 

Here, we compare the detection power for the von Mises distribution with a sample size of $10000$ and parameters (mean parameter $\theta=0$, concentration parameter $\kappa=0.3$) in Table~\ref{table_power}. 

\begin{table}[!htbp]
\centering
\caption{Number of acceptances and rejections. }\label{table_power}
\begin{tabular}{l|rrr}
&\mbox{accept}&\mbox{reject}&repetations\\\hline
\mbox{circular package} &9624&376&10000\\
\mbox{our proposed method} &514&9486&10000\\
\end{tabular}
\end{table}

The results show that the rejection rate of our proposed method is higher, while the rejection rate of the circular package is lower, indicating that our proposed method has higher detection power. 

\section{Discussion and Conclusion}\label{sec5}






Our study has made it possible to derive the P-values for Rao's test statistic in the uniformity test on a circle, that was previously not feasible. 
Up to now, the only available approach for using this test was by referring to tabulated values($n\le 1000$), which appeared in the R circular package based on \cite{russell1995expanded}. 
However, with the P-values we have derived, it is no longer necessary to consult such tables. 
Therefore, our new approach is expected to lead to new developments across various fields of applications, compared to the traditional method using existing tables. 

Furthermore, the extension of Rao's spacing test to the sphere has not yet been accomplished. 
Problems involving angles on the sphere, such as the directions of palaeomagnetism in rock, the directions from the earth to stars, and the directions of optical axes in quartz crystals, are examples of such problems  \citep{mardia2009directional}. 
Therefore, the extension of Rao's spacing test to the sphere is an important challenge for future work.


\begin{appendices}

\section{Proof of Theorem~\ref{col}}\label{secB}
To compute the moments, we introduce the Pochhammer symbol and the following Lemmas related to the  Stirling numbers of the second kind. 

First the Pochhammer symbol is defined as follows: 
	\begin{align}
		(n)_{\overline{r}}&=n(n+1)\cdots(n+r-1), \notag\\
		(n)_{\underline{r}}&=n(n-1)\cdots(n-r+1), \notag
	\end{align}
	where $(n)_{\overline{0}}=(n)_{\underline{0}}=1$. 

Next, we introduce a lemma that expresses the relationship between polynomials and Stirling numbers. 
\begin{lemma}\citep{aigner2007course}\label{prop1}
\[
x^n=\sum_{k=0}^n (x)_{\underline{k}} S_2(n, k). 
\]
Note that we can stop the summation at n, since $S_2(n, k) = 0$
for $k=n+1, n+2, \dots $. 
\end{lemma}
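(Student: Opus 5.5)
We prove the identity first for positive integers $x$ by a double-counting argument, and then extend it to all $x$ by a polynomial-identity argument. Throughout, $S_2(n,k)$ is the number of partitions of an $n$-element set into exactly $k$ nonempty blocks, with $S_2(0,0)=1$ and $S_2(n,0)=0$ for $n\ge 1$.

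\textbf{Step 1: counting for positive integers $x=m$.} Count the maps $f\colon\{1,\dots,n\}\to\{1,\dots,m\}$. Directly there are $m^n$ of them. Alternatively, classify each map by the partition of $\{1,\dots,n\}$ into the nonempty fibres $f^{-1}(y)$.

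\begin{itemize}
\item If this partition has $k$ blocks, it is one of $S_2(n,k)$ possible partitions.
\item Given the partition, $f$ is determined by assigning distinct values in $\{1,\dots,m\}$ to the $k$ blocks. The values must be distinct because different blocks are different fibres.
\item The number of such injective assignments is $m(m-1)\cdots(m-k+1)=(m)_{\underline{k}}$. This count automatically vanishes when $k>m$.
\end{itemize}

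Since each map corresponds to exactly one partition, summing over $k=0,\dots,n$ gives $m^n=\sum_{k=0}^n (m)_{\underline{k}}\,S_2(n,k)$. The term $k=0$ contributes only when $n=0$, matching $m^0=1$.

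\textbf{Step 2: extension to arbitrary $x$.} Both sides are polynomials in $x$ of degree at most $n$. By Step 1 they agree at every positive integer, which is infinitely many points. Hence their difference is the zero polynomial, and the identity holds for all real or complex $x$.

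\textbf{Remark on the cutoff.} The summation stops at $n$ because $S_2(n,k)=0$ for $k>n$: an $n$-element set cannot be split into more than $n$ nonempty blocks.

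\textbf{Alternative route.} As a cross-check, one can argue by induction on $n$. The base case $n=0$ is immediate. For the inductive step, write
\[
x^{n+1}=x\sum_{k} (x)_{\underline{k}}S_2(n,k),
\]
and use $x\,(x)_{\underline{k}}=(x)_{\underline{k+1}}+k\,(x)_{\underline{k}}$. Collecting the coefficient of $(x)_{\underline{k}}$ gives $kS_2(n,k)+S_2(n,k-1)$, which equals $S_2(n+1,k)$ by the standard recurrence for Stirling numbers of the second kind.

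\textbf{Main obstacle.} The only delicate point is the bookkeeping at the edge cases in Step 1: the convention for $k=0$, and the vanishing of $(m)_{\underline{k}}$ when $k>m$. With these conventions in place, the counting is exact, and Step 2 extends it to all $x$.
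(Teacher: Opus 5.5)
Your proof is correct. The paper gives no argument of its own and simply cites Aigner (2007); your double counting of maps $\{1,\dots,n\}\to\{1,\dots,m\}$ by their fibre partitions, followed by the polynomial-identity extension, is the standard argument for this identity. Your inductive cross-check via $x\,(x)_{\underline{k}}=(x)_{\underline{k+1}}+k\,(x)_{\underline{k}}$ and the recurrence of Lemma~\ref{prop2} is also valid.
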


\begin{lemma}\citep{aigner2007course}\label{prop2}
The recurrence relation for the Stirling numbers of the second kind is given as follows: 
	\begin{align}
		S_2(n+1,j)=S_2(n,j-1)+jS_2(n,j). \notag
	\end{align}
\end{lemma}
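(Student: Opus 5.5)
We prove the recurrence $S_2(n+1,j)=S_2(n,j-1)+jS_2(n,j)$ combinatorially, reading $S_2(n,j)$ as the number of partitions of an $n$-element set into exactly $j$ nonempty unlabelled blocks. The conventions are $S_2(0,0)=1$, $S_2(n,0)=0$ for $n\ge 1$, and $S_2(n,j)=0$ for $j>n$ or $j<0$. These agree with the convention in Lemma~\ref{prop1}.

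Fix the set $\{1,\dots,n+1\}$ and a partition of it into $j$ nonempty blocks. We classify the partition by the block containing the distinguished element $n+1$, which gives two disjoint and exhaustive cases.

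\emph{Case 1: $\{n+1\}$ is a singleton block.} Removing that block leaves a partition of $\{1,\dots,n\}$ into $j-1$ blocks. Conversely, adjoining the singleton $\{n+1\}$ to any such partition recovers a partition of the original type. This is a bijection, so there are $S_2(n,j-1)$ partitions in this case.

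\emph{Case 2: $n+1$ lies in a block of size at least $2$.} Deleting $n+1$ from its block leaves a partition of $\{1,\dots,n\}$ into exactly $j$ nonempty blocks, because that block does not become empty. Conversely, from a partition of $\{1,\dots,n\}$ into $j$ blocks we may insert $n+1$ into any one of the $j$ blocks. The resulting $j$ partitions are distinct, and each one deletes back to the original. The map is therefore exactly $j$-to-one, so this case contributes $jS_2(n,j)$.

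Adding the two cases gives the identity. The boundary values need a short check. For $j=0$, both sides vanish when $n\ge 0$. For $j=n+1$, both sides equal $1$. Terms with $j>n+1$ vanish identically.

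If one instead wants an argument consistent with Lemma~\ref{prop1}, there is an algebraic route. Write $x^{n+1}=x\cdot x^n=\sum_k S_2(n,k)\,x(x)_{\underline{k}}$. Then use $x(x)_{\underline{k}}=(x)_{\underline{k+1}}+k(x)_{\underline{k}}$, which holds because $x=(x-k)+k$. The falling factorials form a basis of the polynomials, so comparing the coefficients of $(x)_{\underline{j}}$ with those in $x^{n+1}=\sum_j S_2(n+1,j)(x)_{\underline{j}}$ yields the recurrence. The main point requiring care is the bookkeeping of index shifts and edge terms; there is no genuine obstacle in either route.
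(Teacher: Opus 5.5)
The paper gives no proof of this lemma; it only cites it from \cite{aigner2007course}. Your combinatorial argument is the standard textbook proof, and it is correct. You classify partitions of $\{1,\dots,n+1\}$ into $j$ blocks by whether $n+1$ forms a singleton block, and the $j$-to-one count in Case~2 is right.

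Your algebraic route is also valid. From $(x)_{\underline{k+1}}=(x)_{\underline{k}}(x-k)$ you get $x(x)_{\underline{k}}=(x)_{\underline{k+1}}+k(x)_{\underline{k}}$, so the coefficient of $(x)_{\underline{j}}$ in $x\cdot x^n$ is $S_2(n,j-1)+jS_2(n,j)$. Uniqueness of the falling-factorial expansion then gives the recurrence. This is not circular provided Lemma~\ref{prop1} is established independently, for example by counting maps from an $n$-set into an $x$-set according to their image. The paper does treat it that way, as a cited fact.

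The combinatorial route has one advantage: it needs only the definition of $S_2(n,k)$ as the number of $k$-partitions, which the paper states explicitly. Your conventions also match what the paper uses later. In particular, the proof of the main theorem needs $S_2(m,0)=0$ for $m\ge 1$ when the $j=1$ term of $D_2$ is dropped.
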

Here, the Stirling number of the second kind is denoted by $S_2(n,k)$. 
The $S_2(n, k)$, which represents the number of $k$-partitions of an $n$-set. 
By definition, $S_2(0, 0) = 1$, $S_2(0, k)=0 (k=1, 2, \dots)$. 

Here, we provide the following expansions involving N\"{o}rlund polynomials to derive the moments of $U_n$.
\begin{lemma}\label{lemma1}
The $r$th moment of $U_n$ is given by
\begin{eqnarray*}
E(U_n^r)=\frac{(2\pi)^r}{n^{n+r-1}}\sum_{j=1}^{n-1}(n)_{\underline{j}}\binom{n-1}{j}B_{n+r-j-1}^{(-j)}, 
\end{eqnarray*}
where $B_n^{(z)}$ is the N\"{o}rlund polynomial, defined as the coefficient in the expansion of the following expression \citep{carlitz1960note}. 
\begin{eqnarray*}
\left(\frac{x}{e^x-1}\right)^z=\sum_{n=0}^\infty B_n^{(z)}\frac{x^n}{n!}. 
\end{eqnarray*}
\end{lemma}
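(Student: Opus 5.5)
The plan is to integrate $u^r$ directly against the exact density $f_n(u)$ from Section~\ref{subsec1}, term by term in $j$. The key observation is that each term contains $\phi_j(nu)$, which is the density of a sum of $j$ independent uniforms. After a change of variables, each term therefore becomes a moment of such a sum. These moments are N\"orlund numbers, because of the moment generating function of a uniform sum.

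\textbf{Step 1 (reduce each term to a moment).} Write
\[
E(U_n^r)=\int_0^{2\pi(1-1/n)}u^r f_n(u)\,du ,
\]
and exchange the finite sum over $j=1,\dots,n-1$ with the integral. The $j$th term is
\[
\frac{(n-1)!\binom{n}{j}}{(n-j-1)!\,n^{j-1}(2\pi)^{n-j-1}}\int u^{\,r+n-j-1}\phi_j(nu)\,du .
\]
Substitute $v=nu$ and set $m=r+n-j-1$. The integral becomes $n^{-(m+1)}\int v^{m}\phi_j(v)\,dv$. Here I must check the range of integration. $\phi_j$ is supported on $[0,2\pi j]$, and $j\le n-1$ gives $2\pi j\le 2\pi(n-1)=n\cdot 2\pi(1-1/n)$. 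Hence the upper limit $v=2\pi(n-1)$ covers the whole support, and the integral is the full $m$th moment of $\phi_j$.

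\textbf{Step 2 (identify the moment).} Let $V_1,\dots,V_j$ be i.i.d.\ uniform on $[0,1]$. The density of $2\pi(V_1+\cdots+V_j)$ at $v$ is $(2\pi)^{-1}g_j(v/2\pi)$, where $g_j$ is the standard Irwin--Hall density. This is exactly $\phi_j(v)$ as written, including the prefactor $1/(2\pi(j-1)!)$. Therefore
\[
\int v^m\phi_j(v)\,dv=(2\pi)^m E\bigl[(V_1+\cdots+V_j)^m\bigr].
\]
The moment generating function of $V_1+\cdots+V_j$ is $\bigl((e^t-1)/t\bigr)^j=\bigl(t/(e^t-1)\bigr)^{-j}$. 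By the definition of the N\"orlund polynomial with $z=-j$, the $m$th moment is $B_m^{(-j)}$.

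\textbf{Step 3 (collect constants).} The powers of $2\pi$ combine as $(2\pi)^{m-(n-j-1)}=(2\pi)^r$. The powers of $n$ combine as $n^{-(j-1)}n^{-(m+1)}=n^{-(n+r-1)}$. It remains to verify the identity
\[
\frac{(n-1)!}{(n-j-1)!}\binom{n}{j}=(n)_{\underline{j}}\binom{n-1}{j}.
\]
Both sides equal $\dfrac{n!\,(n-1)!}{j!\,(n-j)!\,(n-j-1)!}$. Summing over $j$ gives the stated formula.

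The only delicate points are the support check in Step~1 and matching the normalisation of $\phi_j$ in Step~2. The latter is where an error would most likely creep in, so I would verify it on $j=1$: there $\phi_1=1/(2\pi)$ on $[0,2\pi]$, and indeed $B_m^{(-1)}=1/(m+1)$. As a further sanity check, I would take $r=1$ and use $B_m^{(-j)}=S_2(m+j,j)/\binom{m+j}{j}$. This turns the formula into Stirling-number form and should recover $E(U_n)=2\pi(1-1/n)^n$. That Stirling form is also the bridge to Theorem~\ref{col} via Lemmas~\ref{prop1} and~\ref{prop2}.
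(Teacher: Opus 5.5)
Your proof is correct. Every constant checks out:

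\begin{itemize}
\item after $v=nu$, the $j$th term of $\int u^r f_n(u)\,du$ is a full $m$th moment, $m=n+r-j-1$, of $2\pi(V_1+\cdots+V_j)$;
\item the moment generating function $\bigl((e^t-1)/t\bigr)^j$ identifies $E[(V_1+\cdots+V_j)^m]$ with $B_m^{(-j)}$;
\item the prefactors collapse to $(2\pi)^r n^{-(n+r-1)}(n)_{\underline{j}}\binom{n-1}{j}$.
\end{itemize}

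As a check, for $n=2$ the formula gives $\pi^r/(r+1)$, which is correct since $U_2$ is uniform on $[0,\pi]$. The paper gives no proof of this lemma. It states it and goes straight to rewriting the N\"orlund numbers as Stirling numbers. Your derivation therefore supplies the missing step rather than reproducing one, and it is the natural route from the displayed density.

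You should make one point explicit. The displayed $\phi_j$ sums only over $k\le j-1$. It therefore agrees with the Irwin--Hall density on $[0,2\pi j]$ but does not vanish beyond it; for example, the formula gives $\phi_1(x)=1/(2\pi)$ for every $x>0$. Since $v=nu$ runs up to $2\pi(n-1)$, which exceeds $2\pi j$ whenever $j<n-1$, your support check in Step~1 holds only if $\phi_j$ is read as the genuine Irwin--Hall density, set to zero outside $[0,2\pi j]$.

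That is the reading the paper's verbal description intends, and it is forced: with the literal truncated sum, $f_3$ would have total mass $2$ rather than $1$. So in Step~2, replace ``exactly $\phi_j(v)$ as written'' by ``equal to $\phi_j$ on $[0,2\pi j]$, with $\phi_j\equiv 0$ beyond''. With that change the argument is complete.
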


In the equation of Lemma~\ref{lemma1}, the N\"{o}rlund polynomial $B_n^{(z)}$ can be expressed in terms of the Stirling numbers of the second kind $S_2$ as follows \citep{quaintance2015combinatorial}: 
\begin{align*}
S_2(n+z, z)=
\left(
\begin{array}{c}
n+z\\
n
\end{array}
\right)
B_{n}^{(-z)}. 
\end{align*}
Then 
\begin{eqnarray*}
E(U_n^r)&=&\frac{(2\pi)^r}{n^{n+r-1}}\sum_{j=1}^{n-1}(n)_{\underline{j}}\binom{n-1}{j}B_{n+r-j-1}^{(-j)}\\
&=&\frac{(2\pi)^r}{n^{n+r-1}}\sum_{j=1}^{n-1}(n)_{\underline{j}}\binom{n-1}{j}
\frac{S_2(n+r-1, j)}{\binom{n+r-1}{j}}\\
&=&\frac{(2\pi)^r}{n^{n+r-1}}\sum_{j=1}^{n-1}
\frac{(n)_{\underline{j}}(n-1)_{\underline{j}}}{(n+r-1)_{\underline{j}}}S_2(n+r-1, j)\\
&{=}&{\frac{(2\pi)^r}{n^{n+r-1}}I_{n,r,0} },
\end{eqnarray*}

where 
\begin{align*}
I_{n,r,u}=\sum_{j=1}^{n-1}A_{n,j,r}(n)_{\underline{j}}S_2(n-1+r+u,j), \notag
\end{align*}
with 	
\begin{align*}
A_{n,j,r}&=\frac{(n-1)!}{(n-1-j)!}\frac{(n-1+r-j)!}{(n-1+r)!}.\notag\\
\end{align*}

To compute the moments recursively, we introduce the following theorem. 
Using recurrence relations and properties of Stirling numbers, we establish the following theorem: 

\begin{theorem}
\begin{align}
		I_{n,r,u}=\frac{1}{(n)_{\overline{r}}}\sum_{j=1}^ra_j^{(r)}(n)_{\underline{j}}(n-j)^{n+r-1+u}\label{main}
	\end{align}
where $a_j^{(r)}$ satisfies: 
\begin{align}
	a_{r}^{(r)}&=1, \notag\\
	a^{(r+1)}_1&=(r+1)a_1^{(r)},\notag\\
	a^{(r+1)}_j&=(r+j)a_j^{(r)}+a_{j-1}^{(r)}\qquad(j=2,\ldots,r).\notag
\end{align}

\end{theorem}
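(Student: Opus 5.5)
The plan is to clear the prefactor first. Using $(n-1+r)!/(n-1)!=(n)_{\overline{r}}$ and $(n-1+r-j)!/(n-1-j)!=(n-j)_{\overline{r}}$, we get $A_{n,j,r}=(n-j)_{\overline{r}}/(n)_{\overline{r}}$. So, writing $m=n-1+r+u$, the statement \eqref{main} is equivalent to
\begin{align*}
J_r(m):=\sum_{j\ge 1}(n-j)_{\overline{r}}\,(n)_{\underline{j}}\,S_2(m,j)=\sum_{j=1}^r a_j^{(r)}(n)_{\underline{j}}(n-j)^{m}.
\end{align*}
Extending the sum over all $j\ge 1$ is harmless. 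For $n\le j\le n+r-1$ the factor $(n-j)_{\overline{r}}$ contains a zero, and for $j>n$ we have $(n)_{\underline{j}}=0$. The point of treating $m$ as a free parameter, which is exactly what the extra index $u$ in $I_{n,r,u}$ allows, is that the induction on $r$ will need the identity at $m$ and at $m+1$ simultaneously.

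\emph{Base case $r=1$.} Here $(n-j)(n)_{\underline{j}}=(n)_{\underline{j+1}}=n\,(n-1)_{\underline{j}}$. Lemma~\ref{prop1} applied with $x=n-1$ gives $J_1(m)=n\sum_j (n-1)_{\underline{j}}S_2(m,j)=n(n-1)^m$, which is the claim with $a_1^{(1)}=1$. More generally, the recurring tool is Lemma~\ref{prop1} in the form $\sum_j (n-k)_{\underline{j}}S_2(m,j)=(n-k)^m$, valid for every integer shift $k$. It will be combined with the factorisation $(n)_{\underline{k}}(n-k)_{\underline{j}}=(n)_{\underline{j+k}}$.

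\emph{Inductive step $r\to r+1$.} I would write $(n-j)_{\overline{r+1}}=(n-j)_{\overline{r}}\bigl((n+r)-j\bigr)$, so that
\begin{align*}
J_{r+1}(m)=(n+r)J_r(m)-\sum_j (n-j)_{\overline{r}}(n)_{\underline{j}}\,jS_2(m,j).
\end{align*}
Lemma~\ref{prop2} rewritten as $jS_2(m,j)=S_2(m+1,j)-S_2(m,j-1)$ turns the last sum into $J_r(m+1)$ minus a shifted sum $\sum_j (n-j-1)_{\overline{r}}(n)_{\underline{j+1}}S_2(m,j)$. Since $(n)_{\underline{j+1}}=n(n-1)_{\underline{j}}$, the shifted sum is $n$ times the quantity $J_r(m)$ with $n$ replaced by $n-1$. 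The induction hypothesis is a polynomial identity valid for every $n$ and every $m$, so it may be applied to all three pieces: $J_r(m)$, $J_r(m+1)$, and $J_r(m)$ at $n-1$. Each piece is then a combination of terms $(n)_{\underline{j}}(n-j)^m$, after using $n(n-1)_{\underline{j-1}}=(n)_{\underline{j}}$ and $(n-j)^{m+1}=(n-j)^m(n-j)$.

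The main obstacle is the final coefficient matching. Collecting the coefficient of $(n)_{\underline{j}}(n-j)^m$, the factor $(n+r)-(n-j)=r+j$ should produce $(r+j)a_j^{(r)}$. The shifted piece should contribute $a_{j-1}^{(r)}$, and it is absent for $j=1$, which gives $a_1^{(r+1)}=(r+1)a_1^{(r)}$. It should also create the new top term $a_{r+1}^{(r+1)}=a_r^{(r)}=1$. The signs and the exact index shift in the $(n-1)$-substituted piece need careful bookkeeping, and I would check them against $r=2$, where the expected answer is $2n(n-1)^m+n(n-1)(n-2)^m$. If the direct shift becomes unwieldy, my fallback is to substitute the explicit formula $S_2(m,j)=\frac{1}{j!}\sum_i(-1)^{j-i}\binom{j}{i}i^m$. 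This would reduce $J_r(m)$ to a finite difference in which only powers $(n-k)^m$ with $k\le r$ survive, and the coefficients could then be compared directly with the recurrence for $a_j^{(r)}$.
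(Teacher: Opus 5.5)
Your proposal is correct and follows essentially the paper's route: induction on $r$, peeling the factor $(n+r)-j$ off $A_{n,j,r+1}$, rewriting $jS_2$ via Lemma~\ref{prop2}, and reindexing the $S_2(\cdot,j-1)$ sum as an instance at $n-1$. Your recursion for $J_{r+1}(m)$ is exactly the paper's $I_{n,r+1,u}=I_{n,r,u+1}-\frac{1}{n+r}I_{n,r,u+2}+\frac{n(n-1)}{(n+r)(n+r-1)}I_{n-1,r,u+2}$ multiplied through by $(n)_{\overline{r+1}}$, and your coefficient matching (which the paper compresses into ``the necessary calculations'') works out as you describe. The only thing to tighten is that the identity for $J_r(m)$ holds for $m\ge 1$, not for every $m$: at $m=0$ the left side is $0$ but the right side is not. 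This suffices because $m=n-1+r+u\ge 1$, and $m\ge 1$ is also what makes the $j=0$ term $S_2(m,0)$ in your shifted sum vanish.
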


\begin{proof}
The proof is by induction on $r$.

For $r=1$, the following result follows from Lemma~\ref{prop1}, which characterizes the relationship between the Stirling numbers of the second kind and certain polynomials. 
\begin{align*}
I_{n,1,u}=\frac{1}{(n)_{\overline{1}}}\sum_{j=1}^1(n)_{\underline{j}}(n-j)^{n+1-1+u}. 
\end{align*}

Assuming~\eqref{main} holds for $r$, we will prove it for $r+1$.
Using Lemma~\ref{prop2} and the definitions of $I_{n,r,u}$ and $A_{n,j,r}$, we derive: 
\begin{align*}
I_{n,r+1,u}&=I_{n,r,u+1}-D_1+D_2, 
\end{align*}
where
\begin{align}
	D_1&=\frac{1}{n+r}\sum_{j=1}^{n-1}A_{n,j,r}(n)_{\underline{j}}S_2(n+r+u+1,j) ,\notag\\
	D_2&=\frac{1}{n+r}\sum_{j=1}^{n-1}A_{n,j,r}(n)_{\underline{j}}S_2(n+r+u,j-1).\notag
\end{align}
Moreover, we have

\begin{align}
	D_1=\frac{1}{n+r}I_{n,r,u+2}\label{sub3}
\end{align}

and

\begin{align}
	D_2&=\frac{n}{n+r}\sum_{j=1}^{n-1}A_{n,j,r}(n-1)_{\underline{j-1}}S_2(n+r+u,j-1)\notag\\
	&=\frac{n}{n+r}\frac{n-1}{n+r-1}I_{n-1,r,u+2}. \label{sub4}
\end{align}
The index of summation is inappropriate for $n=2$; however, this is not essential because $I_{1,r,u+2}=0$.
By summarizing the above results and performing the necessary calculations, we obtain the following.

\begin{align}
I_{n,r+1,u}&=I_{n,r,u+1}-\frac{1}{n+r}I_{n,r,u+2}+\frac{n}{n+r}\frac{n-1}{n+r-1}I_{n-1,r,u+2}\notag\\
&=\frac{1}{(n)_{\overline{r+1}}}\sum_{j=1}^{r+1}a_j^{(r+1)}(n)_{\underline{j}}(n-j)^{n+r+u},\label{final}
\end{align}

where

\begin{align*}
	a_{r+1}^{(r+1)}&=1, \\
	a^{(r+1)}_1&=(r+1)a_1^{(r)},\\
	a^{(r+1)}_j&=(r+j)a_j^{(r)}+a_{j-1}^{(r)}\qquad(j=2,\ldots,r).
\end{align*}

From the above, by mathematical induction, Equation~\eqref{main} holds for any $r$.
\end{proof}

\section{Gram-Charlier expansion}
In Theorem~\ref{col}, we recursively calculated the raw moments $\mu'_{n,r}\coloneqq E(U_n^r)$ for all $n=2, 3, \dots$ and $r=1, 2, \dots$.
Moreover, the raw cumulants $\kappa'_{n,r}$ are obtained from the raw moments  \citep{Smith01051995}:
\begin{align}
\kappa'_{n,r} = \mu'_{n,r} - \sum_{k=1}^{r-1} \binom{r-1}{k-1} \kappa'_{n,k} \mu'_{n,r-k}, \notag 
\end{align}
for all $n=2, 3, \dots$ and $r=1, 2, \dots$.
In this subsection, we calculate the distribution function of $U_n$ based on the Gram-Charlier expansion using cumulants $\kappa'_{n,r}$.
Let $\phi$ be the probability density function of the standard normal distribution and $D$ be the differential operator.
Then, the probability function $f_n(u)$ can be expanded as follows using cumulants $\kappa'_{n,r}$ \citep{stuart2010kendall}:
\begin{align*}
	f_n(u)=\exp\left(\sum_{j=3}^{\infty}\kappa'_{n,j}\frac{(-D)^j}{j!}\right)\phi(u). 
\end{align*}
Moreover, the right-hand side can be expressed in terms of the $j$th exponential complete Bell polynomials $B_j $.
\begin{align*}
	f_n(u)=\sum_{j=3}^\infty B_j(0,0,\kappa'_{n,3},\ldots,\kappa'_{n,j})\frac{(-D)^j}{j!}\phi(u), 
\end{align*}
where $B_j$'s denote the Bell polynomials defined by Definitions 1 and 2. 
  \begin{definition}[exponential partial Bell polynomials]
\begin{align}
B_{j,k}(x_1,\ldots,x_{j-k+1})&=\sum\frac{j!}{c_1!\cdots c_{j-k+1}!(1!)^{c_1}\cdots((j-k+1)!)^{j-k+1}}x_1^{c_1}\cdots x_{j-k+1}^{c_{j-k+1}} \notag\\
&=j!\sum\prod_{i=1}^{j-k+1}\frac{x_i^{c_i}}{(i!)^{c_i}c_i!},\notag
\end{align}
where the summation takes place over all integers $c_1,\ldots,c_{j-k+1}\geq0$(partitions of $j$ into $k$ summands), such that:
\begin{align}
c_1+\cdots+c_{j-k+1}=k,\notag\\
c_1+\cdots+(j-k+1)c_{j-k+1}=j.\notag
\end{align}
\end{definition}
\begin{definition}[exponential complete Bell polynomials]
\begin{align*}
B_j=\sum_{k=1}^jB_{j, k}\ . 
\end{align*}
\end{definition}
We use the exponential complete Bell polynomials $B_n$ \citep{comtet2012advanced}. 

Let $H_j(j=0,1,\ldots)$ be the Helmite polynomials
\begin{align*}
	H_0(x)&=1, \\
	H_{(j+1)}(x)&=xH_j(x)-\frac{d}{dx}H_j(x)\qquad(j=0,1,\ldots). 
\end{align*}
Then, the probability density function $f_n(u)$ can be rewritten as
\begin{align*}
	f_n(u)=\sum_{j=3}^\infty B_j(0,0,\kappa'_{n,3},\ldots,\kappa'_{n,j})\frac{H_j(u)}{j!}\phi(u), 
\end{align*}
since
\begin{eqnarray*}
D^j\phi(u)=(-1)^jH_j(u)\phi(u).
\end{eqnarray*}
Integrating both sides, we obtain the following:
\begin{align*}
	F_n(u)=\Phi(u)-\phi(u)\sum_{j=3}^\infty B_j(0,0,\kappa'_{n,3},\ldots,\kappa'_{n, j})\frac{H_{j-1}(u)}{j!} , 
\end{align*}
where $\Phi$ denotes the distribution function of the standard normal distribution and $F$ denotes the distribution function of $U_n$.
The probability $ \Pr(U_n > U^*_n) $ can be approximated by the partial sum up to the $d$th term of this infinite series. 
\begin{align}
	\Pr(U_n>U^*_n)\simeq 1-\Phi(u)+\phi(u)\sum_{j=3}^d B_j(0,0,\kappa'_{n,3},\ldots,\kappa'_{n,j})\frac{H_{j-1}(u)}{j!}.\notag 
\end{align}

\section{Higher-Order Moments of \texorpdfstring{$U_n$}{Un}}\label{sec11} 
Using the recurrence relations derived in Section~\ref{subsubsec1}, we compute the coefficients for the $r$th moments of the Rao's spacing test statistic up to order $10$. 
Table~\ref{table_a} displays the coefficient matrix $a_j^{(r)}$, where each row corresponds to a moment order $r$ and the columns represent the coefficients $a_j^{(r)}$. 
\begin{table}[htbp]
\centering
\caption{Coefficients of the $r$th moment. $\left(a_j^{(r)}\right)$}
\label{table_a}
\small
\begin{tabular}{r|rrrrrrrrrr}
 \hline
$r$ & $a_1^{(r)}$ & $a_2^{(r)}$ & $a_3^{(r)}$ & $a_4^{(r)}$ & $a_5^{(r)}$ & $a_6^{(r)}$ & $a_7^{(r)}$ & $a_8^{(r)}$ & $a_9^{(r)}$ & $a_{10}^{(r)}$ \\
 \hline
1 & 1 & 0 & 0 & 0 & 0 & 0 & 0 & 0 & 0 & 0 \\
2 & 2 & 1 & 0 & 0 & 0 & 0 & 0 & 0 & 0 & 0 \\
3 & 6 & 6 & 1 & 0 & 0 & 0 & 0 & 0 & 0 & 0 \\
4 & 24 & 36 & 12 & 1 & 0 & 0 & 0 & 0 & 0 & 0 \\
5 & 120 & 240 & 120 & 20 & 1 & 0 & 0 & 0 & 0 & 0 \\
6 & 720 & 1800 & 1200 & 300 & 30 & 1 & 0 & 0 & 0 & 0 \\
7 & 5040 & 15120 & 12600 & 4200 & 630 & 42 & 1 & 0 & 0 & 0 \\
8 & 40320 & 141120 & 141120 & 58800 & 11760 & 1176 & 56 & 1 & 0 & 0 \\
9 & 362880 & 1451520 & 1693440 & 846720 & 211680 & 28224 & 2016 & 72 & 1 & 0 \\
10 & 3628800 & 16329600 & 21772800 & 12700800 & 3810240 & 635040 & 60480 & 3240 & 90 & 1 \\
\hline
\end{tabular}
\end{table}

These coefficients can be used to compute the $r$th raw moment of $U_n$ using the general formula given in Theorem~\ref{col}.


\end{appendices}
\break

\bibliographystyle{apalike}
\bibliography{bibtex_k}

\end{document}